\title{Interference-Precancelled Pilot Design for LMMSE Channel Estimation of GFDM}
\name{Ching-Lun Tai$^1$, Borching Su$^2$, Cai Jia$^2$}
\address{$^1$Department of Electrical Engineering, National Taiwan University, Taiwan\\
$^2$Graduate Institute of Communication Engineering, National Taiwan University, Taiwan}
\newcommand{\vd}{{\bf d}}
\newcommand{\vg}{{\bf g}}
\newcommand{\vh}{{\bf h}}
\newcommand{\vp}{{\bf p}}
\newcommand{\vq}{{\bf q}}
\newcommand{\vu}{{\bf u}}
\newcommand{\vw}{{\bf w}}
\newcommand{\vx}{{\bf x}}
\newcommand{\vy}{{\bf y}}
\newcommand{\zv}{{\bf 0}}
\newcommand{\cA}{{\cal A}}
\newcommand{\cI}{{\cal I}}
\newcommand{\cK}{{\cal K}}
\newcommand{\cM}{{\cal M}}
\newcommand{\bC}{{\mathbb{C}}}
\newcommand{\bN}{{\mathbb{N}}}
\newenvironment{mat}[1]{\left[\begin{array}{#1}}{\end{array}\right]}
\newcommand{\mx}[1]{\begin{mat}{ccccccc}#1\end{mat}}
\renewcommand{\set}[1]{\left\{#1\right\}}
\newcommand{\mA}{{\bf A}}
\newcommand{\mF}{{\bf F}}
\newcommand{\mG}{{\bf G}}
\newcommand{\mH}{{\bf H}}
\newcommand{\mI}{{\bf I}}
\newcommand{\mP}{{\bf P}}
\newcommand{\mS}{{\bf S}}
\newcommand{\mT}{{\bf T}}
\newcommand{\mW}{{\bf W}}
\newcommand{\mX}{{\bf X}}
\newcommand{\ignore}[1]{}
\DeclareMathOperator{\tr}{tr}
\DeclareMathOperator{\e}{E}
\DeclareMathOperator{\diag}{diag}
\newcommand{\modd}[1]{\left\langle#1\right\rangle}
\newtheorem{theorem}{Theorem}
\begin{document}

\maketitle
% \vspace{-0.1cm}
\begin{abstract}
% \vspace{-0.1cm}
Generalized frequency division multiplexing (GFDM) is a promising candidate waveform for next-generation wireless communication systems. However, GFDM channel estimation is still challenging due to the inherent interference. In this paper, we formulate a pilot design framework with linear minimum mean square error (LMMSE) channel estimation for GFDM, and propose a novel pilot design to achieve interference precancellation during pilot generation with the fixed transmit sample values at selected frequency bins. Numerical results demonstrate that the proposed method reduces the channel estimation mean square error and the symbol error rate (SER) in high signal-to-noise ratio (SNR) regions, compared with the conventional methods.
\end{abstract}
\begin{keywords}
Generalized frequency division multiplexing (GFDM),
linear minimum mean square error (LMMSE) channel estimation, pilot design, symbol error rate (SER)
% Frequency-domain (FD) decoupling,
% multiple-input-multiple-output generalized frequency division multiplexing (MIMO-GFDM),
% spatial multiplexing (SM),
% %error-complexity-tradeoff, 
% depth-first sphere decoding (DFSD), 
% sorted QR decomposition (SQRD) 
% % symbol error rate (SER)
% % Audio mosaicing, chiptune, synthesis
\end{keywords}

% \vspace{-0.1cm}
\section{Introduction}
% \vspace{-0.1cm}
Generalized frequency division multiplexing (GFDM), considered as a candidate waveform for next-generation wireless communication systems, features several advantages such as low out-of-band (OOB) emissions and relaxed requirements of time and frequency synchronizations \cite{michailow14}.
% However, due to the inherent inter-subsymbol interference (ISI) and potential inter-subcarrier interference (ICI) accompanied with specific prototype filters, the issue of channel estimation is a crucial but rather challenging task for GFDM, deteriorating the feasibility of GFDM for practical use.
However, unlike traditional OFDM, which is free from interference, GFDM suffers from the inherent inter-subsymbol interference (ISI) and potential inter-subcarrier interference (ICI), and thus the channel estimation is crucial but rather challenging for GFDM.

To further improve the channel estimation performance, several methods have been proposed.
% In \cite{ehsanfar17,na18}, the waveform and structure of GFDM have been modified for better accuracy of channel estimation at the expense of performance degradation of crucial properties, including OOB emission and peak-to-average power ratio (PAPR).
In \cite{ehsanfar17,na18}, the waveform and structure of GFDM have been modified for better accuracy of channel estimation, resulting in the prototype filters for data symbols in pilot subcarriers not satisfying the orthogonality condition. 
Moreover, the technique of orthogonal match pursuit (OMP) is adopted in \cite{zhang16b}. However, the iterative method proposed in \cite{zhang16b} might cause a severe latency to the system, which is against the requirements of next-generation wireless communication systems.

In addition to the above methods, the linear channel estimation has been widely studied. The matched filter (MF) has been employed in \cite{vilaipornsawai14,danneberg15a}, but this approach is based on the assumption of nearly flat and slow fading channels, which are unrealistic in broadband communication. On the other hand, the least square (LS) and linear minimum mean square error (LMMSE) estimation have been adopted in \cite{ehsanfar16,ehsanfar16b,akai17}. Though these methods do not require any additional assumption, they suffer from estimation performance degradation due to the inherent interference in GFDM.

% There are relatively few studies regarding to pilot design for GFDM channel estimation. In \cite{tang17}, the pilots are generated with the concept of in-phase and quadrature (IQ) imbalance compensation. However, this method assumes a perfect interference cancellation, which is rather unrealistic.

Despite the importance of pilot design for channel estimation, it is less studied in the GFDM scenario. In \cite{tang17}, the concept of in-phase and quadrature (IQ) imbalance compensation is adopted for pilot generation. Furthermore, the carrier frequency offset (CFO) estimation is considered for pilot design in \cite{shayanfar18}. However, these schemes assume a perfect interference cancellation, which is rather unrealistic.

In this paper, we formulate a pilot design framework and its corresponding LMMSE channel estimator for GFDM. 
% In addition, we propose a set of parameters without any additional assumption or requirement for the framework such that the inherent interference can be precancelled during the pilot generation, leading to a significant improvement in LMMSE channel estimation accuracy, compared to the conventional methods \cite{ehsanfar16,ehsanfar16b,akai17}.
With the framework, we propose a novel pilot design for interference precancellation during pilot generation to improve GFDM's LMMSE channel estimation accuracy, which is much closer to that of OFDM, compared with previous methods \cite{ehsanfar16,ehsanfar16b,akai17}.

The remainder of this paper is organized as follows. Section \ref{sec:system} describes the GFDM system model. In Section \ref{sec:method}, we first introduce the mathematical expressions of pilot design and corresponding LMMSE channel estimation, and then present the proposed method. Simulation results and discussions are provided in Section \ref{sec:simulation}. Finally, Section \ref{sec:conclusion} concludes the paper.

\emph{Notations:}
Boldfaced capital and lowercase letters denote matrices and column vectors, respectively. We use $\e\{\cdot\}$ and $\modd{.}_D$ to denote the expectation operator and modulo $D$, respectively.
For any set $\cA$, we use $|\cA|$ to denote its cardinality.
We adopt the MATLAB subscripts $:$ and $a:b$ to denote all elements and the elements ordered from $a$ to $b$, respectively, of the subscripted objects.
Given a vector $\vu$, we use $[\vu]_n$ to denote the $n$th component of $\vu$, $\|\vu\|$ the $\ell_2$-norm of $\vu$, and $\diag(\vu)$ the diagonal matrix containing $\vu$ on its diagonal.  
Given a matrix $\mA$, we denote $[\mA]_{m, n}$, $\tr(\mA)$, ${\mA}^T$, ${\mA}^*$, and ${\mA}^H$ its ($m$, $n$)th entry (zero-based indexing), trace, transpose, complex conjugate, and Hermitian transpose, respectively. 
% For any matrices $\mA$ and $\mB$, we use $\mA \otimes \mB$ to denote their Kronecher product.  
% We define ${\mI}_p$ to be the $p \times p$ identity matrix, $\textbf{1}_p$ the $p \times 1$ vector of ones, ${\mW}_p$ the normalized $p$-point discrete Fourier transform (DFT) matrix with ${[\mW_p]}_{m, n}=e^{-j2\pi mn/p}/\sqrt{p}, p \in \bN$, and $\gd_{kl}$ the Kronecker delta.
We define ${\mI}_q$ to be the $q \times q$ identity matrix, 
% $\gd_{kl}$ the Kronecker delta, 
% $[n]$ the set $\{1,2,...,n\}$, 
$\zv_q$ the $q \times 1$ zero vector, and ${\mW}_q$ the normalized $q$-point discrete Fourier transform (DFT) matrix with ${[\mW_q]}_{m, n}=e^{-j2\pi mn/q}/\sqrt{q}, q \in \bN$.
\section{GFDM System Model}
% \vspace{-0.1cm}
\label{sec:system}
\begin{figure}[h]
\begin{center}
\includegraphics[width=8.5cm]{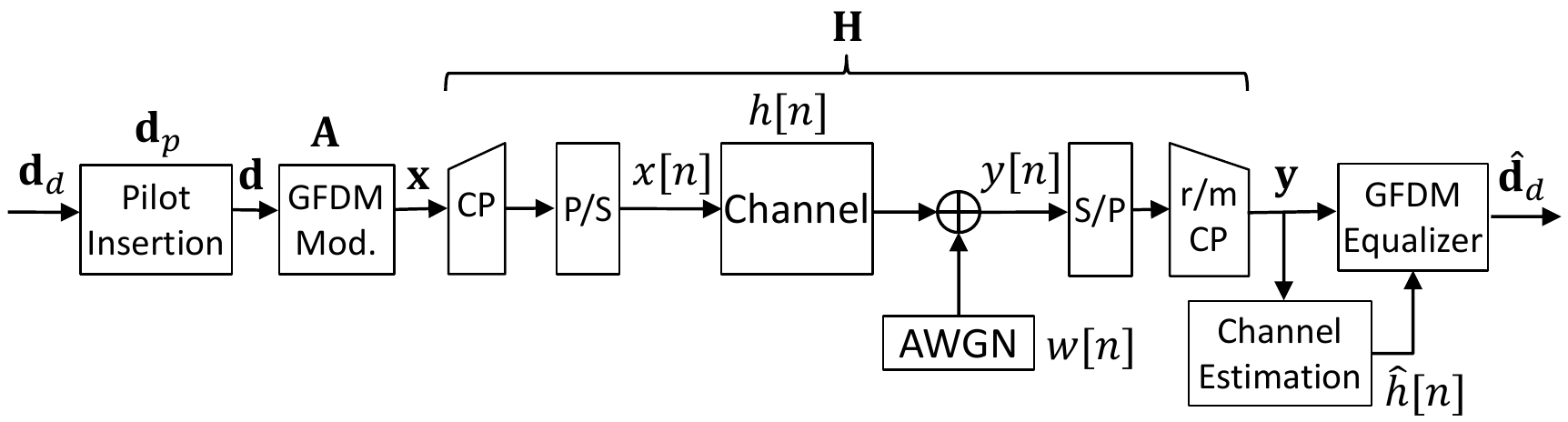}
\caption{GFDM system model with channel estimation (``r/m'' stands for ``remove''.)}
\label{fig:system}
\end{center}
\end{figure}
GFDM is a block-based communication scheme as shown in Fig. \ref{fig:system} \cite{michailow14, chen17b}. Each GFDM block employs $K$ subcarriers, with each transmitting $M$ complex-valued subsymbols. So, a total of $D=KM$ symbols are transmitted in a block. Let $\vd_l \in \bC^D$ be the $l$th GFDM block, whose $m$th subsymbol on the $k$th subcarrier is denoted as $[\vd_l]_{k+mK}$, $m=0,1,...,M-1, k=0,1,...,K-1.$ 
% Assume that the data symbols are zero-mean and independently and identically distributed (i.i.d.) with symbol energy $E_s$, i.e., $\e\{\vd_l \vd_n^H\}=E_S \mI_D \delta_{ln}.$

Each symbol $[\vd_l]_{k+mK}$ is pulse-shaped by a vector $\vg_{k,m}$, whose $n$th entry is $[\vg_{k,m}]_n=[\vg]_{\modd{n-mK}_D}e^{j2 \pi kn/K}$, $n=0, 1, ..., D-1$, $m=0,1,...,M-1,k=0,1,...,K-1$, where $\vg \in \bC^D$ is called the \emph{prototype filter} \cite{michailow14}. Let

\begin{equation}
\mA=[\vg_{0,0}...\vg_{K-1,0} \quad \vg_{0,1}... \vg_{K-1,1}...\vg_{K-1,M-1}]
\label{eq:A}
\end{equation}
be the GFDM transmitter matrix \cite{michailow14} and $\vx_l=\mA \vd_l$ be the transmit sample vector, whose $n$th entry, for $n=0, 1, ..., D-1$, is
\begin{equation}
{[\vx_l]}_n = \sum_{k=0}^{K-1} \sum_{m=0}^{M-1} [\vd_l]_{k+mK}[\vg]_{\modd{n-mK}_D}e^{j2 \pi kn/K}.
\label{eq:xln}
\end{equation}

Subsequently, the vector $\vx_l$ is passed through parallel-to-serial (P/S) conversion, and a cyclic prefix (CP) of length $L$ is further added. Denote the set of subcarrier indices and set of subsymbol indices that are actually employed as $\cK\subseteq \set{0,1,...,K-1}$ and $\cM\subseteq\set{0,1,...,M-1}$, respectively. The digital baseband transmit signal of GFDM can be expressed as \cite{chen17b}
{\fontsize{8}{6}
\begin{equation}
x[n] = \sum_{l=-\infty}^\infty \sum_{k \in \cK} \sum_{m \in \cM} [\vd_l]_{k+mK}g_m[n-lD']e^{j2 \pi k(n-lD')/K} \label{eq:xn},
\end{equation}
}
where $D'=D+L$ and 
\begin{equation}
g_m[n] = \left\{\begin{array}{ll}
[\vg]_{\modd{n-mK-L}_D}, & n = 0, 1, ..., D'-1\\
0, & \mbox{otherwise}
\end{array}\right..
\label{eq:gmn}
\end{equation}
For notational brevity, we omit the subscript $l$ as in $\vx_l$ and $\vd_l$ hereafter. 
% {\color{blue} In the following context, we set $\cK=\{0,1,...,K-1\}$ and $\cM=\{0,1,...,M-1\}$ for mathematical derivations and simulations, and we omit the subscript $l$ as in $\vx_l$ and $\vd_l$ for notational brevity.}

As shown in Fig. \ref{fig:system}, the received signal after transmission through a wireless channel can be modeled as a linear time-invariant (LTI) system $y[n]=h[n]*x[n]+w[n]$, where $h[n]$ is the channel impulse response, and $w[n]$ is the complex additive white Gaussian noise (AWGN) with variance $N_0$.
% We denote $\vh=[h[0] h[1]...h[D-1]]^T$ and $\vw=[w[0] w[1]...w[D-1]]^T$ as the vector forms of channel impulse response and complex AWGN, respectively.
We denote $\vw=[w[0] w[1]...w[D-1]]^T$ and $\vh=[h[0] h[1]...h[N-1]]^T$, where $N-1$ is the channel order, as the vector forms of complex AWGN and channel impulse response, respectively.
Note that $\vh=\sqrt{\mbox{diag}(\vp)}\vq \in \bC^N$, where $\vp\in\bC^N$ is the power delay profile (PDP) and $\vq\in\bC^N$ is a vector of independently and identically distributed (i.i.d.) standard normal random variables, with the assumption that the channel order $N-1$ does not exceed the CP length $L$.
% As shown in Fig. \ref{fig:system}, the received signal is obtained after transmission through a wireless channel $\vh=\sqrt{\mbox{diag}(\vp)}\vq \in \bC^N$, where $\vp\in\bC^N$ is the power delay profile (PDP) and $\vq\in\bC^N$ is a vector of i.i.d. standard normal random variables, with the assumption that the channel order $N-1$ does not exceed the CP length $L$. 
After CP removal and serial-to-parallel (S/P) conversion, the received samples are \cite{michailow14}
{\fontsize{9}{6}
\begin{equation}
\vy=\mH \vx+\vw=\mH \mA \vd+\vw=\mW_D^H\mbox{diag}(\mW_D\mA\vd)\mF_N\vh+\vw,
\label{eq:y}
\end{equation}
}where $\mF_N=[\sqrt{D}\mW_D]_{:,1:N}$, and $\mH\in\bC^{D \times D}$ is the circulant matrix whose first column is $[\vh^T \quad \zv_{D-N}^T]^T$.
% and $\vw \in \bC^D$ is the complex additive white Gaussian noise (AWGN) with variance $N_0$.

To obtain the channel state information (CSI), several spaces in the GFDM block $\vd$ are reserved for the pilot vector $\vd_p$ through pilot insertion, and the rest of spaces in $\vd$ are employed to transmit the data vector $\vd_d$.
The channel estimation is performed upon the arrival of the received samples $\vy$, leading to the reconstructed data vector $\hat{\vd}_d$, as shown in Fig. \ref{fig:system}.
\section{Proposed Method}
% \vspace{-0.1cm}
\label{sec:method}
In this section, the issue of pilot design for LMMSE channel estimation of GFDM systems is considered. Due to the inherent interference in GFDM, the performance of LMMSE channel estimation in existing literature \cite{ehsanfar16,ehsanfar16b,akai17} suffers from severe degradation, especially at high signal-to-noise ratio (SNR), due to the lack of interference precancellation in pilot design. To tackle this problem, we will first introduce the mathematical formulation regarding pilot design and LMMSE channel estimation, and then illustrate the proposed method.
\begin{figure}[h]
    \centering
    \includegraphics[width=8.5cm]{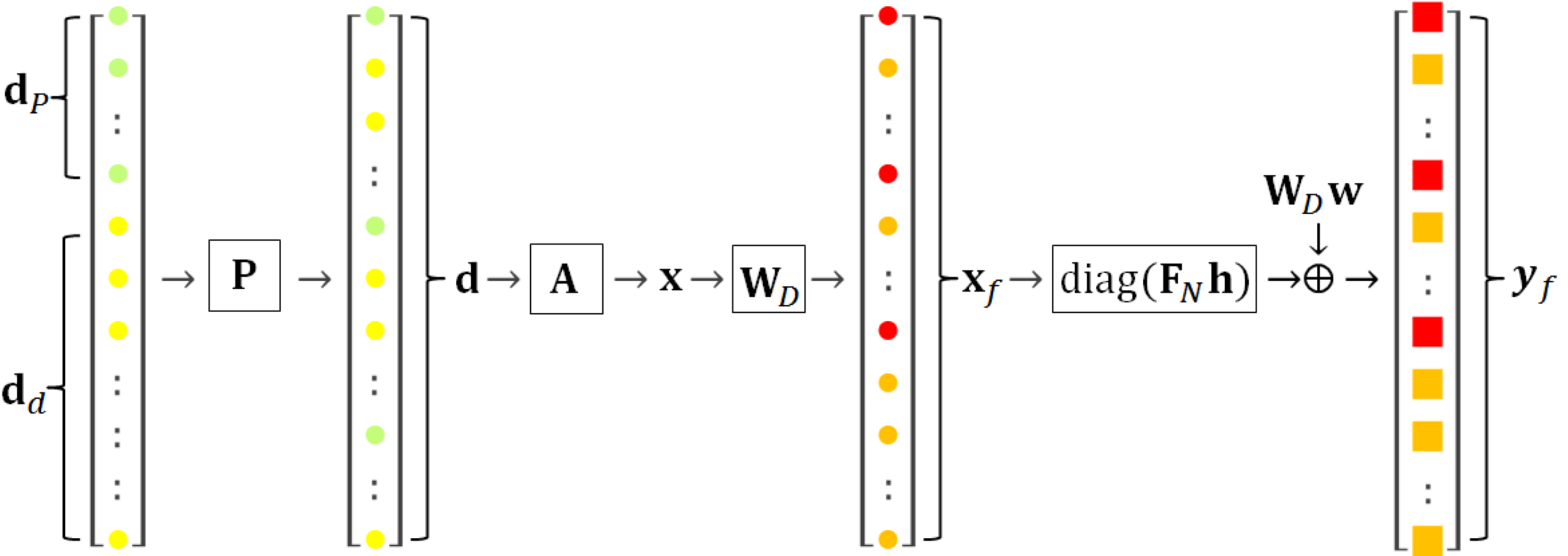}
    \caption{Pilot insertion and frequency bin deployment, with green solid circles denoting the elements in $\vd_p$, yellow solid circles the elements in $\vd_d$, red solid circles and squares the chosen frequency bins of $\vx_f$ and $\vy_f$, orange circles and squares the rest of frequency bins of $\vx_f$ and $\vy_f$, where $\vy_f=\diag(\mF_N\vh)\vx_f+\mW_D\vw$, and the relationship between $\vx_f$ and $\vy_f$ benefiting the frequency-domain channel estimation}
    \label{fig:pilot}
\end{figure}
% \vspace{-0.7cm}
\vspace{-0.15cm}
\subsection{Pilot Design and LMMSE Channel Estimation}
\vspace{-0.1cm}
In order to achieve the channel estimation at the receiver, the GFDM block $\vd$ is generated from two subvectors, the pilot vector $\vd_p\in\bC^p$ and the data vector $\vd_d\in\bC^{D-p}$, as shown in Fig. \ref{fig:pilot}, where $p\in\bN$ denotes the number of pilots in a block.
% Note that $p$ is a positive integer, denoting the number of pilots in a block.

Accordingly, the GFDM block $\vd$ can be expressed as
% \begin{equation}
% \vd=\mx{\mI_p\\ \zv_{(D-p) \times p}}\vd_p+\mx{\zv_{p \times (D-p)}\\ \mI_{D-p}}\vd_d
% \label{eq:dpdd}
% \end{equation}
% \begin{equation}
% \underbrace{\mx{\vy_1 \\ \vdots \\ \vy_R}}_{\vy} = \underbrace{\mx{
% \mH_{1,1}\mA  & \cdots & \mH_{1,T}\mA\\
% \vdots & \ddots & \vdots\\
% \mH_{R,1}\mA  & \cdots & \mH_{R,T}\mA\\}}_{\tilde{\mH}}
% \underbrace{\mx{\vd_1  \\ \vdots \\ \vd_T}}_{\vd} + \vn,
% \label{eq:y}
% \end{equation}
\begin{equation}
\vd=\underbrace{[\mP_1 \quad \mP_2]}_{\mP}\mx{\vd_p\\ \vd_d},
\label{eq:dpdd}
\end{equation}
% \begin{figure}[h]
%     \centering
%     \includegraphics[width=3cm]{reshape_pilot.PNG}
%     \caption{Reshape Pilot}
%     \label{fig:reshape_pilot}
% \end{figure}
where $\mP$ is a $D \times D$ permutation matrix, and $\mP_1\in\bC^{D \times p}$ and $\mP_2\in\bC^{D \times (D-p)}$ are submatrices of $\mP$.
In addition, given a reference sequence $\vd_r\in\bC^p$, the GFDM block $\vd$ can be represented as a linear transformation of $\vd_r$ and $\vd_d$, i.e., 
% \begin{equation}
% \vd=[\mU \quad \mV]\mx{\vd_r\\ \vd_d},
% \label{eq:data}
% \end{equation}
\begin{equation}
\vd=\mS \vd_r+\mT \vd_d,
\label{eq:data}
\end{equation}
where $\mS\in\bC^{D \times p}$ and $\mT\in\bC^{D \times (D-p)}$ are the corresponding linear coefficient matrices of $\vd_r$ and $\vd_d$, respectively.
Assume that $\vd_r$ and $\vd_d$ are independent, and the symbols in $\vd_d$ are zero-mean and i.i.d. with symbol energy $E_s$, i.e., $\e\{\vd \vd^H\}=\mS\e\{\vd_r\vd_r^H\}\mS^H+E_S\mT\mT^H.$
With different choices of $\mS$ and $\mT$, the pilot vector $\vd_p$ in (\ref{eq:dpdd}) is accordingly modified. Based on this pilot design framework, different pilots can be generated in order to meet specific requirements.
% With different choices of $\mS$ and $\mT$, the pilot vector $\vd_p$ in the data vector $\vd$ is accordingly modified, leading to a change in the performance of LMMSE channel estimation.
In the following theorem, we derive the LMMSE channel estimation corresponding to the pilot design framework.
\begin{theorem}
\label{Thm1}
Given the received samples $\vy$ as defined in (\ref{eq:y}), the LMMSE estimated channel $\hat{\vh}_{\footnotesize{\mbox{LMMSE}}}$ can be derived as
% \begin{equation}
% \hat{\vh}_{\footnotesize{\mbox{LMMSE}}}=\mathbf{\Sigma}_{hY}\mathbf{\Sigma}_{YY}^{-1}\vy,
% \label{eq:LMMSE}
% \end{equation}
\begin{equation}
\hat{\vh}_{\footnotesize{\mbox{LMMSE}}}=\mG_{\footnotesize{\mbox{LMMSE}}}\vy,
\label{eq:LMMSE}
\end{equation}
with
{\fontsize{8}{6}
\begin{align}
\mG_{\footnotesize{\mbox{LMMSE}}}=&\mathbf{\Sigma}_{hh}(\mX_r \mF_N)^H\nonumber\\
\times&[(\mX_r \mF_N) \mathbf{\Sigma}_{hh} (\mX_r \mF_N)^H+\mathbf{\Sigma}_{\Psi \Psi}+N_0\mI_D]^{-1}\mW_D,
\end{align}
}where $\mX_r=\mbox{diag}(\mW_D\mA\mS\vd_r)$,  $\mathbf{\Sigma}_{hh}=\e\{\vh\vh^H\}=\mbox{diag}(\vp)$, and $\mathbf{\Sigma}_{\Psi \Psi}=(\mF_N \mathbf{\Sigma}_{hh} \mF_N^H)(\mW_D \mA \mT \mT^H \mA^H \mW_D^H)$.
\end{theorem}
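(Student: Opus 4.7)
The plan is to derive the formula via the standard LMMSE / Wiener formula $\mG_{\text{LMMSE}} = \mathbf{\Sigma}_{\vh\vy}\mathbf{\Sigma}_{\vy\vy}^{-1}$, so the work reduces to writing out the cross-covariance $\mathbf{\Sigma}_{\vh\vy}$ and the auto-covariance $\mathbf{\Sigma}_{\vy\vy}$ and then massaging the result to expose the factor $\mW_D$ on the right. First I would substitute (\ref{eq:data}) into the received signal (\ref{eq:y}) and split
\begin{equation}
\vy \;=\; \mW_D^H \mX_r \mF_N \vh \;+\; \mW_D^H\,\diag(\mW_D\mA\mT\vd_d)\mF_N\vh \;+\; \vw,
\end{equation}
identifying the three pieces as reference-driven signal, data-induced self-interference, and AWGN. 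The reference $\vd_r$ (hence $\mX_r$) is treated as deterministic, while $\vh$, $\vd_d$, $\vw$ are mutually independent and zero-mean, which will make most cross terms vanish.

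Next I would compute $\mathbf{\Sigma}_{\vh\vy}=\e\{\vh\vy^H\}$. The interference term contributes $\mathbf{\Sigma}_{hh}\mF_N^H\,\e\{\diag(\mW_D\mA\mT\vd_d)^H\}\mW_D=0$ because $\e\{\vd_d\}=\zv$, and the noise term drops out by independence, leaving $\mathbf{\Sigma}_{\vh\vy}=\mathbf{\Sigma}_{hh}(\mX_r\mF_N)^H\mW_D$. Then I would expand $\mathbf{\Sigma}_{\vy\vy}$ into nine blocks, again using independence and zero means to kill the six cross terms. The three surviving diagonal blocks give
\begin{equation}
\mW_D^H(\mX_r\mF_N)\mathbf{\Sigma}_{hh}(\mX_r\mF_N)^H\mW_D \;+\; \e\{\vPsi'\vPsi'^H\} \;+\; N_0\mI_D,
\end{equation}
where $\vPsi'$ denotes the time-domain interference piece.

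The main obstacle is evaluating the interference auto-covariance in closed form. For this I would pass to the frequency-domain interference $\vPsi=\mW_D\vPsi'=\diag(\mW_D\mA\mT\vd_d)\mF_N\vh$ and compute entrywise: using independence of $\vh$ and $\vd_d$ together with $\e\{\vd_d\vd_d^H\}=E_s\mI$, one obtains $[\e\{\vPsi\vPsi^H\}]_{i,j}=[\mF_N\mathbf{\Sigma}_{hh}\mF_N^H]_{i,j}\,[\mW_D\mA\mT\mT^H\mA^H\mW_D^H]_{i,j}$ (up to the $E_s$ scaling), which is precisely $\mathbf{\Sigma}_{\Psi\Psi}$ as written in the statement (the juxtaposition denoting an elementwise product of the two $D\times D$ correlation matrices). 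Hence $\e\{\vPsi'\vPsi'^H\}=\mW_D^H\mathbf{\Sigma}_{\Psi\Psi}\mW_D$.

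Finally I would exploit the unitarity $\mW_D\mW_D^H=\mI_D$ via the identity $[\mW_D^H\mM\mW_D+\mN]^{-1}=\mW_D^H[\mM+\mW_D\mN\mW_D^H]^{-1}\mW_D$, applied with $\mM=(\mX_r\mF_N)\mathbf{\Sigma}_{hh}(\mX_r\mF_N)^H$ and $\mN=\mW_D^H\mathbf{\Sigma}_{\Psi\Psi}\mW_D+N_0\mI_D$. After this substitution, the left-hand $\mW_D$ in $\mathbf{\Sigma}_{\vh\vy}$ cancels with the $\mW_D^H$ emerging from the identity, while the right-hand $\mW_D$ is exactly the trailing factor appearing in the theorem, and $N_0\mI_D$ is preserved since $\mW_D\mW_D^H=\mI_D$. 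Collecting the pieces yields the claimed closed form for $\mG_{\text{LMMSE}}$.
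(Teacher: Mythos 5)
Your proposal is correct and follows essentially the same route as the paper: the paper expands $\e\{\|\vh-\mG\vy\|^2\}$ into traces and sets its Wirtinger derivative with respect to $\mG^*$ to zero, which is exactly the derivation of the Wiener solution $\mathbf{\Sigma}_{\vh\vy}\mathbf{\Sigma}_{\vy\vy}^{-1}$ that you invoke directly, and the underlying covariance computations (the splitting of $\vy$ into reference, interference, and noise terms, the vanishing cross terms, the elementwise-product structure of $\mathbf{\Sigma}_{\Psi\Psi}$, and the handling of the $\mW_D$ factors) coincide. The only cosmetic difference is that you quote the standard LMMSE closed form rather than re-deriving it from the trace expansion.
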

\begin{proof}
Let the channel estimator be $\mG$, and $\hat{\vh}=\mG\vy$ is the estimated channel.
% Note that $\diag(\mW_D\mA\vd)=\mX_r+\mX_d$, where $\mX_r=\diag(\mW_D\mA\mS\vd_r)$ and $\mX_d=\diag(\mW_D\mA\mT\vd_d)$.
Note that $\diag(\mW_D\mA\vd)=\mX_r+\mX_d$.
First, we derive the expected square error of channel estimation $\e\{\|\vh-\mG\vy\|^2\}$ as
% \vspace{-0.1cm}
\begin{align}
    % \begin{align}
      & \e\{\tr((\vh-\mG\vy)(\vh-\mG\vy)^H)\}\nonumber\\
    % &=& \e\{\tr(\vh\vh^H-\vh\vy^H\mG^H-\mG\vy\vh^H+\mG\vy\vy^H\mG^H)\}\nonumber\\
    =& \tr(\mathbf{\Sigma}_{hh}-\e\{\vh\vy^H\mG^H\}-\e\{\mG\vy\vh^H\}+\e\{\mG\vy\vy^H\mG^H\})\nonumber\\
    % &=& \tr(\mathbf{\Sigma}_{hh}-\e\{\vh[\vh^H(\mW_D^H\mX_r\mF_N+\mW_D^H\mX_d\mF_N)^H+\vw^H]\mG^H\}\nonumber\\
    % &-& \e\{\mG[(\mW_D^H\mX_r\mF_N+\mW_D^H\mX_d\mF_N)\vh+\vw]\vh^H\}\nonumber\\
    % &+& \e\{\mG[(\mW_D^H\mX_r\mF_N+\mW_D^H\mX_d\mF_N)\vh\vh^H\cdot\nonumber\\
    % && (\mW_D^H\mX_r\mF_N+\mW_D^H\mX_d\mF_N)^H+(\mW_D^H\mX_r\mF_N+\mW_D^H\mX_d\mF_N)\cdot\nonumber\\
    % && \vh\vw^H+\vw\vh^H(\mW_D^H\mX_r\mF_N+\mW_D^H\mX_d\mF_N)^H+\vw\vw^H]\mG^H\})\nonumber\\
    =& \tr(\mathbf{\Sigma}_{hh})-\tr(\mathbf{\Sigma}_{hh}(\mW_D^H\mX_r\mF_N)^H\mG^H)\nonumber\\
    -& \tr(\mG(\mW_D^H\mX_r\mF_N)\mathbf{\Sigma}_{hh})+\tr(\mG[\mW_D^H(\mX_r\mF_N)\mathbf{\Sigma}_{hh}\nonumber\\
    \times& (\mX_r\mF_N)^H\mW_D+\mW_D^H\mathbf{\Sigma}_{\Psi\Psi}\mW_D+N_0\mI_D]\mG^H). 
    \label{expect_error}
    % \end{align}
\end{align}
The Wirtinger derivatives of $\e\{\|\vh-\mG\vy\|^2\}$ with regard to $\mG^*$ are obtained as \cite{gesbert07}
% \vspace{-0.1cm}
{\fontsize{7}{6}
\begin{align}
    \label{diff}
    % \begin{align}
    %  &-& \mathbf{\Sigma}_{hh}(\mW_D^H\mX_r\mF_N)^H-\mathbf{\Sigma}_{hh}(\mW_D^H\mX_r\mF_N)^H+2\mG[\mW_D^H\cdot\nonumber\\
    % && (\mX_r\mF_N)\mathbf{\Sigma}_{hh}(\mX_r\mF_N)^H\mW_D+\mW_D^H\mathbf{\Sigma}_{\Psi\Psi}\mW_D+N_0\mI_D]\nonumber\\
    \frac{\partial \e\{\|\vh-\mG\vy\|^2\}}{\partial \mG^*}= & -\mathbf{\Sigma}_{hh}(\mX_r\mF_N)^H\mW_D+ \mG[\mW_D^H(\mX_r\mF_N)\mathbf{\Sigma}_{hh}\nonumber\\
    % +& \mG[\mW_D^H(\mX_r\mF_N)\mathbf{\Sigma}_{hh}(\mX_r\mF_N)^H\mW_D\nonumber\\
    \times&(\mX_r\mF_N)^H\mW_D+\mW_D^H\mathbf{\Sigma}_{\Psi\Psi}\mW_D+N_0\mI_D].
    % +&\mW_D^H\mathbf{\Sigma}_{\Psi\Psi}\mW_D+N_0\mI_D].
    % \end{align}
\end{align}
}To solve for the optimal estimator, we require the derivatives to be zero and derive that
% \vspace{-0.1cm}
{\fontsize{7}{6}
\begin{align}
\mG_{\footnotesize{\mbox{LMMSE}}}=&\mathbf{\Sigma}_{hh}(\mX_r \mF_N)^H\nonumber\\
\times&[(\mX_r \mF_N) \mathbf{\Sigma}_{hh} (\mX_r \mF_N)^H+\mathbf{\Sigma}_{\Psi \Psi}+N_0\mI_D]^{-1}\mW_D.
\end{align}
}
\end{proof}
A special case of Theorem \ref{Thm1} has been found in the pilot design of the conventional LMMSE channel estimation methods \cite{ehsanfar16,ehsanfar16b,akai17}, where the matrices $\mS$ and $\mT$ are set to satisfy $[\mS \quad \mT]=\mI_D$. In this case, this implies $\vd_p=\vd_r$ and $\mP=\mI_D$.
% Note that the pilot design in the conventional LMMSE channel estimation methods \cite{ehsanfar16,ehsanfar16b,akai17} corresponds to $[\mS \quad \mT]=\mI_D$, which implies $\vd_p=\vd_r$ and $\mP=\mI_D$.
% In order to achieve the channel estimation while transmission, the first $p$ positions of subsymbols within a GFDM block are reserved for pilots, and the rest of positions are reserved for data symbols.
\vspace{-0.15cm}
\subsection{Proposed Choice of Parameters}
\vspace{-0.1cm}
In order to reduce the impact of interference on the LMMSE channel estimation for GFDM, the proposed method precancels the interference during pilot generation. We will first pick up the appropriate frequency bins indexed as $\cI\subset\{1,2,...,D\}$ (illustrated as red circles and squares shown in Fig. \ref{fig:pilot}, and require that
\begin{equation}
[\vx_f]_{\cI}=\vd_r,
\label{eq:xfI}
\end{equation}
where $\vx_f=\mW_D\vx$ is the frequency-domain transmit samples. Then, the pilots will be generated according to the requirement, leading to the effect of interference precancellation in pilot design.
Note that the channel can be exactly reconstructed in the noiseless case only if both $p \geq N$ and $|\cI| \geq p$ are satisfied.
% Note that the channel can be exactly reconstructed with the absence of noise only if both $p \geq N$ and $|\cI| \geq p$ are satisfied.

To illustrate the proposed method, we will derive the corresponding $\mS$ and $\mT$ as follows.
Define the two matrices $\mW_1=[\mW_D \mA \mP]_{\cI,1:p}$ and $\mW_2=[\mW_D \mA \mP]_{\cI,(p+1):D}$. 
Based on the expressions (\ref{eq:dpdd}) and (\ref{eq:xfI}), we require that $\vd_r=\mW_1\vd_p+\mW_2\vd_d$,
and it can be derived that
\begin{equation}
\vd_p=\mW_1^{-1}(\vd_r-\mW_2\vd_d).
\label{eq:dp}
\end{equation}
From (\ref{eq:dp}), it can be observed that the pilot generation in the proposed method exploits the information about $\mP$, $\vd_r$, and $\vd_d$. The information about $\mP$ and $\vd_r$ is fundamentally available at the transmitter, and the information about $\vd_d$ can be fed into the transmitter before transmission.

Substituting (\ref{eq:dp}) into $\vd_p$ in (\ref{eq:dpdd}), we obtain
\begin{equation}
    \vd=\mP_1\mW_1^{-1}(\vd_r-\mW_2\vd_d)+\mP_2\vd_d.
\end{equation}Finally, the corresponding $\mS$ and $\mT$ of the proposed method can be expressed as
\begin{equation}
\label{eq:ST}
(\mS,\mT)=(\mP_1\mW_1^{-1},\mP_2-\mP_1\mW_1^{-1}\mW_2).
\end{equation}
% \begin{equation}
% \mS=\mP_1\mW_1^{-1},
% \label{eq:U}
% \end{equation}
% \begin{equation}
% \mT=\mP_2-\mP_1\mW_1^{-1}\mW_2.
% \label{eq:V}
% \end{equation}
Note that the expressions (\ref{eq:dp})-(\ref{eq:ST}) exist if and only if the index set $\cI$ mentioned above is chosen such that the matrix $\mW_1$ is invertible.
In the next section, we will choose $\cI$ such that the above condition satisfies.
\section{Simulation}
% \vspace{-0.1cm}
\label{sec:simulation}
In this section, numerical results are presented to compare the performances, in terms of mean square error (MSE) of LMMSE channel estimation and symbol error rate (SER), of the proposed method with those of the conventional methods \cite{ehsanfar16,ehsanfar16b,akai17} for two cases. Both the Dirichlet \cite{matthe14a} and raised cosine (RC) \cite{michailow14} filters are employed for GFDM channel estimation. In addition, OFDM channel estimation is included in the experiments for a comprehensive comparison.
% \vspace{-0.2cm}
\vspace{-0.15cm}
\subsection{Parameter Settings}
\vspace{-0.1cm}
The modulation is QPSK, the symbol energy is $E_s=1$, the CP length is $L=16$, and the roll-off factor of the RC filter is $\alpha=0.9$.
% {\color{blue} We set $\cK=\{0,1,...,K-1\}$ and $\cM=\{0,1,...,M-1\}$, i.e., all subcarriers and subsymbols are used.}
Consider two cases $(K,M)=(8,128)$ and $(16,64)$ for GFDM.
% {\color{blue} We set $\cK=\{0,1,...,K-1\}$ and $\cM=\{0,1,...,M-1\}$, i.e., all subcarriers and subsymbols are used.}
For a fair comparison, the same block size is used for OFDM.
% \cite{lin10}.
For each case, the channel length is $N=K$ (with channel order $K-1$), the pilot vector length is $p=K$, and the set of frequency bin indexes is $\cI=\{1,M+1,2M+1,...,(K-1)M+1\}$ with $|\cI|=p$.
We set $\cK=\{0,1,...,K-1\}$ and $\cM=\{0,1,...,M-1\}$, i.e., all subcarriers and subsymbols are used.
% {\color{blue} The sets of actually used subcarrier and subsymbol indices are set as $\cK=\{0,1,...,K-1\}$ and $\cM=\{0,1,...,M-1\}$, respectively, i.e., all subcarriers and subsymbols are used.}
To evaluate the performances, Monte Carlo simulation is adopted with randomly generated channel realizations and independent data sets for the realizations. We generate $N_h=100$ spatially uncorrelated Rayleigh-fading multipath channel realizations, whose channel PDP $\vp$ is exponential from 0 to -10 dB with $N$ taps, and $N_d=100$ independent data blocks for each channel realization.
Moreover, the genie-aided condition, where full CSI is known, is included and considered as the performance bound for the SER evaluation of all schemes.
% To evaluate the channel estimation performance, $N_h=500$ spatially uncorrelated Rayleigh fading channels, whose channel PDP $p$ is exponential from 0 to -10 dB with $N$ taps, are randomly generated. For the evaluation of
% \vspace{-0.2cm}
\vspace{-0.15cm}
\subsection{Simulation Results}
\vspace{-0.1cm}
% For the case $(K,M)=(16,128)$, the simulation results are shown in Fig., where we compare the proposed method with the conventional methods and OFDM. Figs. demonstrate the performances of MSE of LMMSE channel estimation and SER, respectively.
For the case $(K,M)=(8,128)$, the simulation results are shown in Fig. \ref{fig:K8M128_all}, where we compare the proposed method with conventional methods and OFDM in terms of MSE of LMMSE channel estimation and SER, respectively.
In Fig. \ref{fig:K8M128_MSE}, the MSE is calculated by averaging through the squares of pairwise Euclidean norms between a channel realization and its LMMSE estimated counterpart. According to Fig. \ref{fig:K8M128_MSE}, the proposed method significantly outperforms conventional methods, especially at high SNR where the impact of interference is much larger than that of noise, since the proposed method precancels the effect of interference during pilot generation. 
% Note that the MSE performance of the Dirichlet filter outperforms that of the RC filter, since the Dirichlet filter merits the ICI-free property.
% Compared to GFDM channel estimation, OFDM channel estimation performs better because of its interference-free property in frequency domain.
% Note that the proposed method slightly outperforms OFDM under low SNR, where the noise dominates
% Note that the proposed method slightly outperforms OFDM under low SNR, since the channel estimation of GFDM benefits from the frequency diversity due to the multiple subsymbols within one subcarrier.

When we interpret the SER performance presented in Fig. \ref{K8M128_SER}, it can be observed that the proposed method performs comparably to conventional methods at low SNR, but outperforms conventional methods at high SNR. This is because the process of interference precancellation will increase the energy of pilot vector $\vd_p$, which degrades the SER performance.
Note that the Dirichlet filter outperforms the RC filter, whose transmitter matrix is not unitary when both $K$ and $M$ are even and leads to the SER performance degradation.
% In addition, OFDM still outperforms the other schemes.
% Moreover, the performance gap between the curves of LMMSE channel estimation and Genie-aided scheme of OFDM is much smaller than GFDM counterpart due to the Interference-free property of OFDM in frequency domain.
In addition, OFDM still performs the best among all schemes, with its performance gap between the curves of LMMSE channel estimation and genie-aided scheme smaller than that of GFDM.

For the case $(K,M)=(16,64)$, the simulation results demonstrate similar trends. Note that compared with the previous case, the MSE performance slightly deteriorates, while the SER performance slightly improves. 
% \vspace{-0.1cm}
\begin{figure}%
\centering
\subfigure[Channel MSE]{%
\label{fig:K8M128_MSE}%
\includegraphics[height=1.5in]{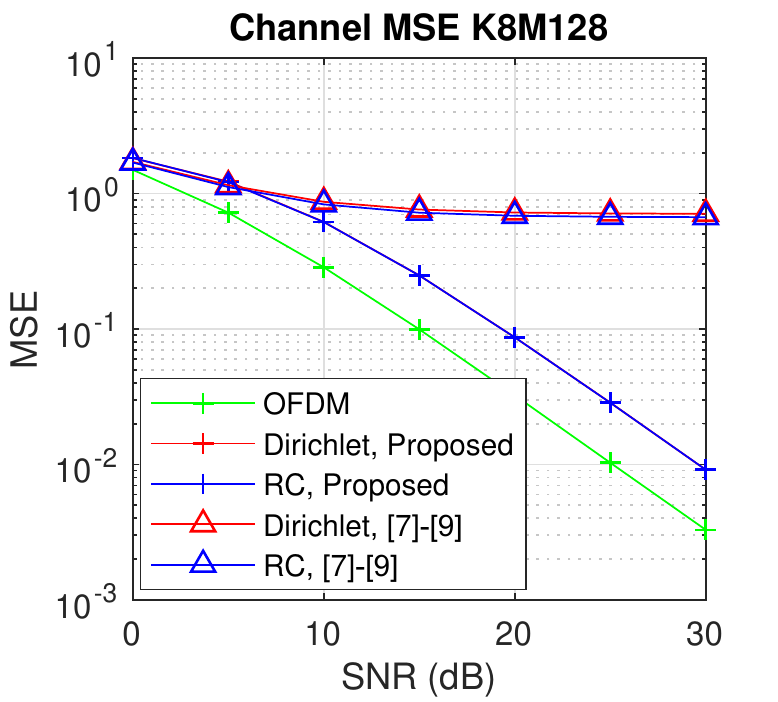}}%
% \quad
\hspace{-0.4cm}
\subfigure[SER]{%
\label{K8M128_SER}%
\includegraphics[height=1.5in]{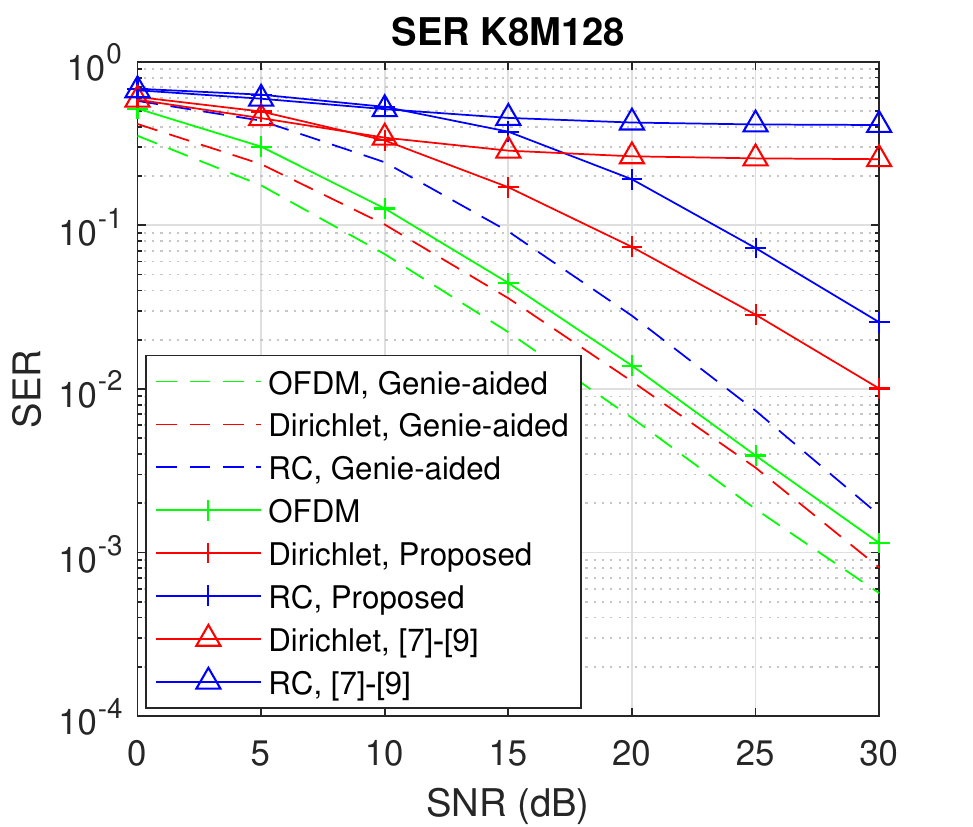}}%
% \vspace{-0.3cm}
\caption{Performance comparison for $K=8,M=128$}
\label{fig:K8M128_all}
\end{figure}

\begin{figure}%
\centering
\subfigure[Channel MSE]{%
\label{fig:K16M64_MSE}%
\includegraphics[height=1.5in]{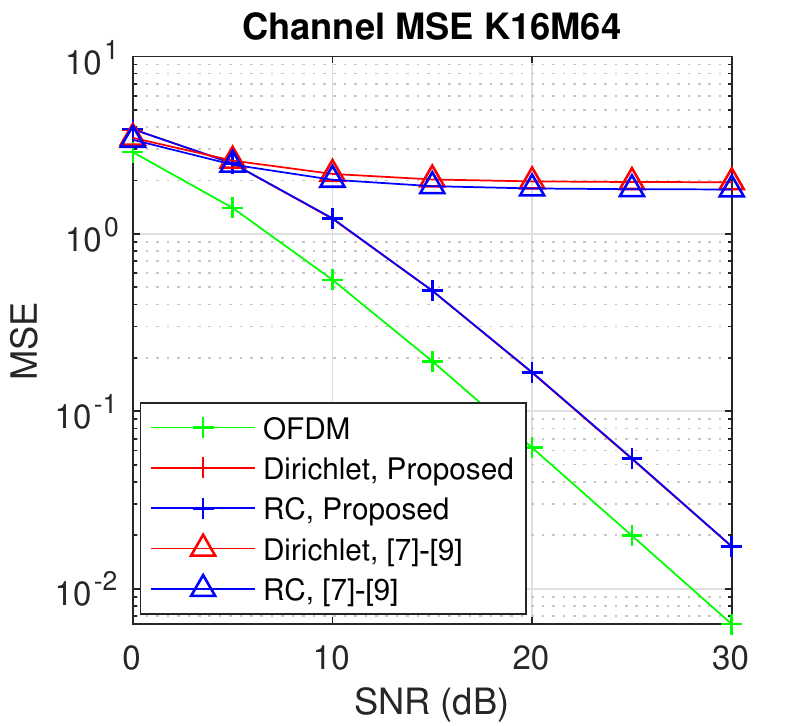}}%
% \quad
\hspace{-0.4cm}
\subfigure[SER]{%
\label{K16M64_SER}%
\includegraphics[height=1.5in]{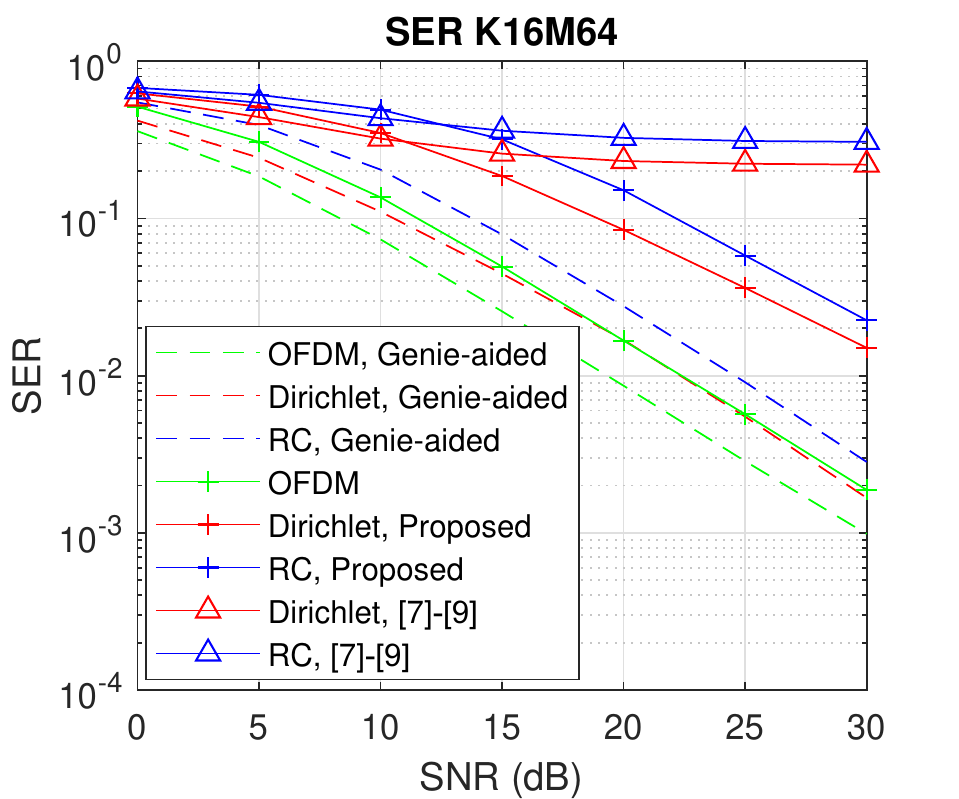}}%
% \vspace{-0.3cm}
\caption{Performance comparison for $K=16,M=64$}
\label{fig:K16M64_all}
\end{figure}

\section{Conclusion}
% \vspace{-0.1cm}
\label{sec:conclusion}
In this paper, a pilot design framework for LMMSE channel estimation of GFDM is formulated. In addition, we propose a set of parameters for the framework such that the inherent interference in GFDM can be precancelled during the pilot generation. To achieve this goal, we require the transmit samples at selected frequency bins to have values equal to a fixed reference sequence, and the pilot vector with its corresponding LMMSE channel estimator can be obtained. Simulation results demonstrate that the proposed method can achieve a significant reduction of channel estimation MSE and high-SNR SER, compared with the conventional methods.

\bibliographystyle{IEEEbib}
\bibliography{chiptune}

\begin{thebibliography}{10}

\bibitem{michailow14}
N.~Michailow, M.~Matthé, I.S. Gaspar, A.N. Caldevilla, L.L. Mendes, A.~Festag,
  and G.~Fettweis,
\newblock ``{Generalized Frequency Division Multiplexing for 5th Generation
  Cellular Networks},''
\newblock vol. 62, no. 9, pp. 3045--3061, Sept. 2014.

\bibitem{ehsanfar17}
S.~Ehsanfar, M.~Matthe, D.~Zhang, and G.~Fettweis,
\newblock ``{Interference-Free Pilots Insertion for MIMO-GFDM Channel
  Estimation},''
\newblock in {\em 2017 IEEE Wireless Communications and Networking Conference
  (WCNC)}, March 2017, pp. 1--6.

\bibitem{na18}
Z.~Na, Z.~Pan, M.~Xiong, X.~Liu, W.~Lu, Y.~Wang, and L.~Fan,
\newblock ``{Turbo Receiver Channel Estimation for GFDM-Based Cognitive Radio
  Networks},''
\newblock {\em IEEE Access}, vol. 6, pp. 9926--9935, 2018.

\bibitem{zhang16b}
J.~Zhang, Y.~Li, and K.~Niu,
\newblock ``{Iterative channel estimation algorithm based on compressive
  sensing for GFDM},''
\newblock in {\em 2016 IEEE International Conference on Network Infrastructure
  and Digital Content (IC-NIDC)}, Sept 2016, pp. 244--248.

\bibitem{vilaipornsawai14}
U.~Vilaipornsawai and M.~Jia,
\newblock ``{Scattered-pilot channel estimation for GFDM},''
\newblock in {\em 2014 IEEE Wireless Commun. and Networking Conf. (WCNC)},
  April 2014, pp. 1053--1058.

\bibitem{danneberg15a}
M.~Danneberg, N.~Michailow, I.~Gaspar, M.~Matthé, Dan Zhang, L.~L. Mendes, and
  G.~Fettweis,
\newblock ``{Implementation of a 2 by 2 MIMO-GFDM transceiver for robust 5G
  networks},''
\newblock in {\em 2015 Int. Symposium on Wireless Communication Systems
  (ISWCS)}, Aug 2015, pp. 236--240.

\bibitem{ehsanfar16}
S.~Ehsanfar, M.~Matthé, D.~Zhang, and G.~Fettweis,
\newblock ``{A Study of Pilot-Aided Channel Estimation in MIMO-GFDM Systems},''
\newblock in {\em WSA 2016; 20th Int. ITG Workshop on Smart Antennas}, March
  2016, pp. 1--8.

\bibitem{ehsanfar16b}
S.~Ehsanfar, M.~Matthe, D.~Zhang, and G.~Fettweis,
\newblock ``{Theoretical Analysis and CRLB Evaluation for Pilot-Aided Channel
  Estimation in GFDM},''
\newblock in {\em 2016 IEEE Global Communications Conference (GLOBECOM)}, Dec
  2016, pp. 1--7.

\bibitem{akai17}
Y.~Akai, Y.~Enjoji, Y.~Sanada, R.~Kimura, H.~Matsuda, N.~Kusashima, and
  R.~Sawai,
\newblock ``{Channel estimation with scattered pilots in GFDM with multiple
  subcarrier bandwidths},''
\newblock in {\em 2017 IEEE 28th Annual International Symposium on Personal,
  Indoor, and Mobile Radio Communications (PIMRC)}, Oct 2017, pp. 1--5.

\bibitem{tang17}
N.~Tang, S.~He, H.~Wang, Y.~Huang, and L.~Yang,
\newblock ``{Training sequence design for channel estimation and IQ imbalance
  compensation in GFDM systems},''
\newblock in {\em 2017 9th International Conference on Wireless Communications
  and Signal Processing (WCSP)}, Oct 2017, pp. 1--6.

\bibitem{shayanfar18}
H.~{Shayanfar}, H.~{Saeedi-Sourck}, and A.~{Farhang},
\newblock ``{CFO and Channel Estimation Techniques for GFDM},''
\newblock in {\em 2018 IEEE MTT-S International Microwave Workshop Series on 5G
  Hardware and System Technologies (IMWS-5G)}, Aug 2018, pp. 1--3.

\bibitem{chen17b}
P.~C. Chen, B.~Su, and Y.~Huang,
\newblock ``{Matrix Characterization for GFDM: Low Complexity MMSE Receivers
  and Optimal Filters},''
\newblock {\em IEEE Transactions on Signal Processing}, vol. 65, no. 18, pp.
  4940--4955, Sept 2017.

\bibitem{gesbert07}
A.~Hjorungnes and D.~Gesbert,
\newblock ``{Complex-Valued Matrix Differentiation: Techniques and Key
  Results},''
\newblock {\em IEEE Transactions on Signal Processing}, vol. 55, no. 6, pp.
  2740--2746, June 2007.

\bibitem{matthe14a}
M.~Matth\'{e}, N.~Michailow, I.~Gaspar, and G.~Fettweis,
\newblock ``{Influence of pulse shaping on bit error rate performance and out
  of band radiation of Generalized Frequency Division Multiplexing},''
\newblock in {\em Proc. IEEE ICC Workshop}, 2014, pp. 43--48.

\end{thebibliography}
\small

\end{document}